\documentclass[12pt]{article}
\usepackage{fullpage,url,amsmath,amsfonts,amssymb, amsthm,mathtools}
\newcommand{\transpose}{\text{${}^{\text{T}}$}}

\newcommand{\T}{\transpose}
\newcommand{\C}{\mathcal{C}}

\newcommand{\ti}{\times}

\newcommand{\al}{\alpha}

\newcommand{\de}{\delta}

\newcommand{\f}{\frac}

\newenvironment{smatrix}{\bigl(\begin{smallmatrix}}{\end{smallmatrix}\bigr)}
\newenvironment{ssmatrix}{\left(\begin{smallmatrix}}{\end{smallmatrix}\right)}

\parskip .1cm
\newtheorem{thm}{Theorem}[section]
\newtheorem{proposition}[thm]{Proposition}

\newtheorem{corollary}[thm]{Corollary}

\newtheorem{lemma}[thm]{Lemma}

\newtheorem{algorithm}{\bf{Algorithm}}
\newtheorem{example}[thm]{Example}
\newtheorem{protoexample}[thm]{Prototype Example}

\def\rank{\operatorname{rank}}
\newcommand{\Z}{\mathbb{Z}}

\newcommand{\F}{\mathcal{F}}


\title{On codes induced from Hadamard matrices \footnote{ Keywords: Hadamard, code, linear, convolutional,  self-dual, dual-containing, quantum code, linear  complementary \\ AMS Classification 2020: 94B05, 94B10, 15B99}}
\author{Ted Hurley\footnote{University of Galway (previously National University of Ireland Galway) \\ email: Ted.Hurley@universityofgalway.ie}}
\date{}


\begin{document}
\maketitle 

\begin{abstract}

 Unit derived schemes applied to Hadamard matrices are used to construct and analyse  linear block and convolutional codes. Codes are constructed  to prescribed types, lengths and rates and multiple series of  self-dual, dual-containing, linear complementary dual and quantum error-correcting of both linear block {\em and} convolutional codes are derived. 
  \end{abstract}
\section{Introduction} 
Unit schemes form a basis for the  algebraic construction and analysis of  linear block and convolutional codes and these are  described in \cite{hurleyultimate} and references therein. 
The  non-existence of general algebraic methods for constructing, designing and studying convolutional codes has often been a problem and limited very much their size and availability, see for example McEliece \cite{mceliece} and also \cite{alm1,alm2,guar,mun}.
Here methods derived in \cite{hurleyultimate} are extended for use  on Hadamard matrices to provide constructions of linear block and convolutional codes and to construct these to required types, distances and rates.  The work here can be read independently of \cite{hurleyultimate} although the ideas initiated in \cite{hurleyultimate} are in the background.
The types constructed  include {\em self-dual}, {\em dual-containing}, {\em linear complementary dual} and {\em quantum} codes and large lengths, rates and distances are achievable.
The codes are given over finite fields  and types of code required 
are constructed  in both the
linear block and convolutional cases.
Methods using orthogonal units, Fourier/Vandermonde units, group ring units and related units for constructing and analysing such codes is devised  in \cite{hurleyultimate}. The methods are applied to Hadamard matrices to construct algebraically the linear block and convolutional codes and  properties of Hadamard matrices allows these to be constructed to required length, rate and type. Infinite series are derived. 
 From a single Hadamard matrix, multiple linear block and convolutional codes are formed and formed to required types. 

 The distances achieved can often be calculated algebraically; for example the distance of a rate $\frac{1}{2}$ convolutional code obtained is of the order of twice the distance of the linear block rate $\frac{1}{2}$ codes obtained from the same Hadamard matrix.

 $C^{\perp}$ denotes the dual of the code $\C$. See section \ref{note} for precise definition of dual of a convolutional code.
$\C$ is a {\em dual-containing} (DC) code if $\C \cap C^{\perp}=\C^{\perp}$;
$\C$ is a {\em linear complementary dual} (LCD) code if $\C \cap C^{\perp}=0$.
 $\C$ is a {\em self-dual code} provided $\C^{\perp}= \C$ and is an important type of dual-containing code. 
Dual-containing codes, which include self-dual codes, can be used to construct quantum error-correcting codes (QECC) by the CSS method \cite{calderbank,calderbank1,steane}; here  convolutional quantum error-correcting codes are constructed in this way from Hadamard matrices. 

A Hadamard 
matrix is an $n\ti n$ matrix $H$ with entries $\pm 1$ satisfying $HH\T= nI_n$. Such a matrix can only exist for $n=2$ or $n=4m$ for a positive integer $m$,  \cite{lint} Theorem 18.1; see \cite{horadam} for a beautifully written book on Hadamard matrices.
Here these $\pm 1$ entries may be considered as elements in a general field.

The {\em Walsh-Hadamard codes} can be formed from Walsh-Hadamard matrices of size $2^k\ti 2^k$ by equating  the entries $(-1)$ to $0$ and then forming binary codes from the $(k+1)$ linearly independent rows remaining. 
 Using the unit-derived methods on a general Hadamard $n\ti n$ matrix gives much more scope, arbitrary rates, good distances, required types but also both linear block and convolutional codes are  formed. Several algorithms exist for decoding convolutional codes, the most common ones being the Viterbi algorithm and the sequential decoding algorithm.

 Propositions \ref{lcd}, \ref{self}, \ref{same}, \ref{diff}, \ref{diff1} on codes  derived from Hadamard are proven  and these form a basis for specific algorithms.  The  following general algorithms are noted:
 \begin{itemize} \item Algorithm \ref{alg1} constructs LCD rate $\frac{r}{n}$, for $r$, $0<r<n$, linear block codes from Hadamard $n \ti n$ matrices. \item Algorithm \ref{alg2} constructs self-dual length $2n$ codes from Hadamard $n\ti n$ matrices. \item Algorithm \ref{alg3} constructs self-dual length $n$ convolutional codes from Hadamard $n\ti n$ matrices. \item Algorithm \ref{alg4} constructs dual-containing, length $n$, rate $\frac{r}{n}$, ($n >r\geq \frac{n}{2}$), convolutional codes from an $n\ti n$ Hadamard matrices.
 \end{itemize}

    The codes are  readily implemented once an expression for the Hadamard matrix is available. Large lengths and rates are obtainable. The brilliant Computer Algebra system GAP with included packages Guava and Gauss, \cite{gap}, proves extremely useful in manipulating submatrices, working over finite field, constructing applications and computing and verifying  distances. 
 
 Higher memory convolutional codes may also be generated by breaking the Hadamard matrices further into blocks 
 ; see \cite{hurleyultimate,hurleyorder} for this.
 The general process for constructing  higher memory convolutional codes from  Hadamard matrices is left for later development; however an example, \ref{higher}, is given  for a small case to show how the process  can proceed for Hadamard matrices.
 
 The notation and parameters used in the following applications may be found in section \ref{note}. 
 The notation for linear block codes is standard; however the notation and parameters used for convolutional codes  vary in the literature and the specifics used need to be clarified.

 \paragraph{Explicit samples of applications}\label{appintro}
 Applications are obtained by applying the Propositions and Algorithms to particular cases. Once the Hadamard matrix is formed, the Propositions and Algorithms may then be applied to produce multiple cases of required codes. 

  $H(n)$ denotes a Hadamard matrix of size $n$.
  

From $H(20)$ the following are formed: \begin{enumerate} \item LCD $[20,13,4]_3$, $[20,7,6]_3$ codes; \item  self-dual convolutional $(20,10,10;1,12)_{3^2}$ codes; \item\label{ju} DC convolutional $(20,13,7;1,8)_{3^2}$ codes;\item quantum codes of length $20$, distance $8$ and rate $\frac{6}{20}$ over $GF(3^2)$; \item self-dual $[20,10,8]_5$ codes;\item $(20,10,10;1,14)_{7^2}$ self-dual convolutional codes; \item $[40,20,12]_3$ self-dual codes given directly in systematic form, see Proposition \ref{self}. \end{enumerate}
 From $H(28)$ the following are formed: \begin{itemize} \item 
   LCD $[28, 16,6]_3, [28,12,9]_5$  codes;  \item Convolutional self-dual $(28, 14, 14;1,12)_{3}$ codes over $GF(3)$; \item DC convolutional  $(28,18,10;1,8)$ codes over $GF(3)$; \item quantum codes of length $28$, distance $8$, rate $\frac{8}{28}$ over $GF(3)$; \item  self-dual convolutional $(28,14,14;1, 16)$ codes over $GF(5)$ \item DC convolutional $(28,16,12;1,14)$ codes over $GF(5)$; \item $[28,14,9]_7$ self-dual codes.
 \end{itemize} 

Generally from $H(n)$ with $p\nmid n$, self-dual $(n, \frac{n}{2}, \frac{n}{2};1,d)$ convolutional  codes and DC $(n, r, n-r;1;d)$, $r > \frac{n}{2}$, convolutional codes are formed.  
In prototype example \ref{pro12}, it is shown how the different types of LCD, self-dual, DC, quantum, linear block and convolutional codes may be derived for a  small Hadamard matrix case.
  Self-dual codes over $GF(p)$ can often be  obtained from a Hadamard matrix of size  $n$ when $p\mid n, p\neq 2$, see Proposition \ref{dual1}. For example self-dual $[12k,6k,d]_3$ codes are produced from Paley-Hadamard matrices of size $12k$, see section \ref{ternary}.

An understanding of the Propositions and Algorithms  allows one to  take a Hadamard matrix and construct LCD, self-dual, DC and QECC codes therefrom. Section \ref{ternary} is given over to considering ternary codes and codes over fields of characteristic $3$. Codes over $GF(5)$ from Hadamard matrices may similarly be worked on. Using non-separable Hadamard matrices  in applications seems to work out better. 
   
  \subsection{Additional notation and background}\label{note}
 The notation  for linear block codes is standard and may be found in \cite{blahut,joh,mceliece,macslo} and many others. $GF(q)$ denotes the finite field with $q$ elements and $\Z_m$ denotes the integers modulo $m$; in particular $\Z_p = GF(p)$ for a prime $p$. A $[n,r,d]$ code denotes a linear block code of length $n$, dimension $r$, and (minimum) distance $d$; the {\em rate} is $\f{r}{n}$. A $[n,r,d]_q$ code denotes a linear block code of length $n$, dimension $r$ and distance $d$ over the field $GF(q)$.  

Different equivalent definitions for convolutional codes are given in the literature. The notation and definitions used here follow that given in \cite{ros,smar,rosenthal1}. 
A rate $\frac{k}{n}$ convolutional code with parameters $(n,k,\de)$ over a field $\F$ is 
 a submodule of $\F[z]^n$ generated by a reduced
basic matrix $G[z] =(g_{ij}) \in \F[z]^{r\ti n}$ of rank $r$ where $n$ is
the length,  $\de = \sum_{i=1}^r \de_i$ is the {\em degree}
with  $\de_i= \max_{1\leq j\leq r}{\deg g_{ij}}$. Then 
$\mu=\max_{1\leq i\leq r}{\de_i}$ is known as the {\em memory} of the
code and the code is  then given with parameters $(n,k,\de;\mu)$. 
The parameters $(n, k,\delta;\mu, d_f)$ are  used for such a code 
with free (minimum) distance $d_f$. Further $(n, k,\delta;\mu, d_f)_q$ is used to specify that the code is over the field $GF(q)$.  

 Suppose $\C$ is a convolutional code in $\F[z]^n$ of rank $k$. A generating matrix $G[z] \in
\F[z]_{k\times n}$ of $\C$ having rank $k$ is called a
{\em generator} or {\em encoder matrix}  of $\C$. 
A matrix $H \in \F[z]_{n\times(n-k)}$ satisfying $\C = \ker H =
\{v \in \F[z]^n : vH = 0 \}$ is said to be a {\em control matrix} or
    {\em check matrix} of the code $\C$. 

 Convolutional codes can be {\em
   catastrophic or non-catastrophic}; see
- for example \cite{mceliece} for the basic definitions. A
catastrophic convolutional code is prone to catastrophic error
propagation and is not much use. A
convolutional code described by a generator matrix  with {\em right
polynomial inverse} is a non-catastrophic code; this is sufficient for
our purposes. The designs given here for  the generator matrices allow for
specifying directly the control matrices and the right polynomial
inverses. 
Lack of algebraic construction methods for designing convolutional codes limited their size
and availability, see McEliece \cite{mceliece} for discussion and also \cite{alm1,alm2,guar,mun}. It is shown here how Hadamard matrices can be used to construct convolutional codes but see also \cite{hurleyultimate,required}. 
Several algorithms exist for decoding convolutional codes, the most common ones being the Viterbi algorithm and the sequential decoding algorithm.

Let $G(z)$ be the generator matrix for a convolutional code $\C$ with memory $m$.
Suppose $G(z)H\T(z) = 0$, so that $H\T(z)$ is a control matrix, and then $H(z^{-1})z^m$  generates the {\em convolutional dual code} of $\C$, see \cite{dual2} and \cite{dualconv}. This is also known as the {\em module-theoretic dual code}.\footnote{In convolutional coding theory, the idea of {\em dual code} has 
two meanings. 
The other dual convolutional code defined  
  is called {\em the sequence space dual}; the generator
  matrices for these two types are related by a specific formula.}
The code is then dual-containing provided the code generated by $H(z^{-1})z^m$ is contained in the code generated by $G(z)$.

The dual of a code $\C$ is denoted by $\C^{\perp}$. 

Dual-containing (DC) codes, which contain self-dual codes, are an important class of codes for theoretical and practical purposes. Besides their direct applications,  DC codes are used to construct quantum error correcting codes (QECC) by the CSS method \cite{calderbank,calderbank1,steane}. 
Here then {\em  quantum error correcting} linear and convolutional codes of  different lengths and rates  are constructed explicitly from Hadamard matrices. 

 Linear complementary dual, LCD, codes have been studied extensively for their theoretical and practical  importance's  by
Carlet, Mesnager, Tang, Qi and Pelikaan, \cite{sihem3,sihem2,sihem4, sihem},  and were originally introduced by Massey in \cite{massey,massey2}.  They have been
used  for improving the security of information on sensitive devices
against {\em side-channel attacks} (SCA) and {\em fault non-invasive
  attacks}, see \cite{carlet}, and  in {\em data
  storage} and {\em communications' systems}.  
Here  LCD linear block and convolutional codes are constructed from Hadamard matrices using the unit-derived and associated methods.

 The {\em unit-derived} and associated methods for constructing and analysing linear block codes  was initiated in \cite{hur0, hur1,hur2,hur11} and developed further in \cite{unitderived,hurleyquantum}.
The papers \cite{hurleyultimate,required} and references therein extend the unit-derived and related methods ideas to form in addition convolutional codes and algebraic methods for constructing whole series of linear block and convolutional codes to prescribed length, distance, rate and type are derived. 
The unit-derived methods give further information on the code in addition to describing the  generator and control matrices. 
McEliece - see for example \cite{mceliece} - remarks: `A most striking fact is the lack of algebraic constructions of families of convolutional codes'; constructing convolutional codes of reasonable length was beyond computer generation. 

With which Hadamard matrix of a particular size $n$ should one work? It is not an issue in cases where just the main properties of a Hadamard matrix are required. 
It seems best from practice and intuition  to work with non-separable Hadamard matrices. Here {\em non-separable} means the Hadamard matrix  is not  a non-trivial tensor product of other Hadamard matrices. The Walsh-Hadamard matrices are separable except for size  $2$.

\section{Linear block and convolutional codes induced from Hadamard matrices}

The Walsh-Hadamard binary linear block codes  $[2^k,k,2^{k-1}]_2$ and $[2^k, k+1, 2^{k-1}]_2$ have very small rate   but have found use probably on account of the distances and decoding methods available. 
Codes from general Hadamard matrices as now described allow much more scope with much  better rates, good distances, required types and both linear block and convolutional codes may be formed.   

Let $H$ be a Hadamard matrix with $HH\T = nI_n$. Break $H$ as $H= \begin{ssmatrix} A \\ B \end{ssmatrix} $ for $A$ an $r\ti n$ matrix and then $ \begin{ssmatrix} A \\ B \end{ssmatrix} \begin{ssmatrix}  A\T & B\T \end{ssmatrix}=nI_n$. When $n\neq 0$ in the field under consideration, a code is obtained in which $A$ is the generator matrix and $B\T$ is a check matrix. This is the basic method for producing the linear block codes from Hadamard matrices. A more general method is to take any $r$ rows of the Hadamard matrix to generate a code and a check matrix is obtained by eliminating the corresponding columns of the transpose of the Hadamard matrix.  The convolutional codes are essentially obtained from breaking the Hadamard matrix into blocks and using the blocks as `components' of the generator matrix of a convolutional code. Properties of the Hadamard matrix are used to construct  the {\em type} of code required. 

The following Propositions \ref{lcd}, \ref{self}, \ref{same}, \ref{diff} and \ref{diff1} form the basis for deriving algorithms with which series of  codes, both linear block and convolutional, are derived. 
  These allow linear block and convolutional codes of particular types, such as self-dual, DC, LCD and quantum, to be constructed and enables these to be 
 devised to required length and rate. Algorithms \ref{alg1}, \ref{alg2}, \ref{alg3}, \ref{alg4} follow from which numerous  applications can be devised.

 It is worth noting that arithmetic over $GF(p)=\Z_p$ is simply modular arithmetic and is easily implemented. 
    Let $d(X)$ denote the distance of the linear block code generated by the matrix $X$.  
  
\begin{proposition}\label{lcd} Let $H$ be a  Hadamard matrix of size $n$ and $n\neq 0$ in a field $\F$. Suppose $H$ has the form 
      $H= \begin{ssmatrix} A \\ B \end{ssmatrix}$, where $A$ has size $r\ti n$, implying  $  \begin{ssmatrix} A \\ B \end{ssmatrix} \begin{ssmatrix}  A\T & B\T \end{ssmatrix}=nI_n$. Then the code generated by $A$  over $\F$ is an LCD $[n,r]$ code $\mathcal{A}$ and $B$ generates the dual code of $\mathcal{A}$. 
    \end{proposition}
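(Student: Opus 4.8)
The plan is to expand $HH\T = nI_n$ in block form, read off the orthogonality relations between $A$ and $B$, and then derive all three assertions (the parameters $[n,r]$, the LCD property, and the identification of the code generated by $B$ with $\mathcal{A}^{\perp}$) from these relations together with the hypothesis that $n$ is invertible in $\F$.

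First I would write $H\T = \begin{ssmatrix} A\T & B\T \end{ssmatrix}$ and compute
\[
  \begin{ssmatrix} A \\ B \end{ssmatrix}\begin{ssmatrix} A\T & B\T \end{ssmatrix}
  = \begin{ssmatrix} AA\T & AB\T \\ BA\T & BB\T \end{ssmatrix} = nI_n,
\]
which forces $AA\T = nI_r$, $BB\T = nI_{n-r}$ and $AB\T = 0$ (hence also $BA\T = 0$). Since $n\neq 0$ in $\F$, the matrix $\f1n H\T$ is a genuine (two-sided) inverse of the square matrix $H$, so in addition $H\T H = nI_n$, i.e.\ $A\T A + B\T B = nI_n$. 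The relation $AA\T = nI_r$ shows $A$ has full row rank $r$, so $A$ generates an $[n,r]$ code $\mathcal{A}$; similarly $BB\T = nI_{n-r}$ shows $B$ has rank $n-r$. From $AB\T = 0$ each row of $B$ is orthogonal to each row of $A$, so the code generated by $B$ is contained in $\mathcal{A}^{\perp}$; since its dimension $n-r$ equals $\dim\mathcal{A}^{\perp}$, the code generated by $B$ is exactly $\mathcal{A}^{\perp}$.

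For the LCD claim I would show $\F^n = \mathcal{A}\oplus\mathcal{A}^{\perp}$. Using $A\T A + B\T B = nI_n$, any row vector $v\in\F^n$ satisfies $v = \f1n(vA\T)A + \f1n(vB\T)B$, where the first summand lies in $\mathcal{A}$ and the second in the code generated by $B$, namely $\mathcal{A}^{\perp}$. Hence $\mathcal{A}+\mathcal{A}^{\perp} = \F^n$, and because $\dim\mathcal{A} + \dim\mathcal{A}^{\perp} = r + (n-r) = n$ the sum is direct, so $\mathcal{A}\cap\mathcal{A}^{\perp} = 0$; that is, $\mathcal{A}$ is LCD. (Alternatively one may simply invoke the standard fact that a code with generator matrix $G$ is LCD exactly when $GG\T$ is nonsingular, applied to $G = A$ with $GG\T = nI_r$ invertible.)

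There is no substantial obstacle: the argument is essentially one block multiplication plus a dimension count. The only points that need a little care are the deduction of $H\T H = nI_n$ from $HH\T = nI_n$ (which genuinely uses that $H$ is square and that $n^{-1}$ exists in $\F$, turning the one-sided inverse into a two-sided one) and checking that the rank and dimension bookkeeping identifying the code generated by $B$ with $\mathcal{A}^{\perp}$ is carried out over $\F$ and not merely over $\Z$.
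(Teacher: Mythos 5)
Your proposal is correct and follows essentially the same route as the paper: block-multiply $HH\T=nI_n$ to get $AB\T=0$ (identifying the code generated by $B$ as $\mathcal{A}^{\perp}$) and full row rank of $A$ and $B$, then conclude $\mathcal{A}\cap\mathcal{A}^{\perp}=0$. The only (cosmetic) difference is in that last step: the paper deduces the trivial intersection directly from the linear independence of the $n$ rows of the invertible matrix $H$, whereas you obtain it from $A\T A+B\T B=nI_n$ plus a dimension count (or from the nonsingularity of $AA\T=nI_r$); all of these are valid.
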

    \begin{proof} Both $A$ and $B$ have full ranks as $H$ is invertible.  Now $AB\T=0$ and so $B\T$ is a control matrix for the code $\mathcal{A}$ and thus $B$ generates the dual code of $\mathcal{A}$. Since $H$ is invertible in $\F$ a combination of the rows of $A$ cannot be a non-trivial combination of the rows of $B$ and thus $\mathcal{A}$ is an LCD code.
      \end{proof}
    \begin{algorithm}\label{alg1} {\em Construct LCD rate $\frac{r}{n}$ linear block codes from Hadamard $n\ti n$ matrices $H$ as follows:}
      
      Let $\F= GF(p)$ where $p\nmid n$. Choose any $r$ rows of $H$ to form the generator matrix of an $[n,r]$ code over $\F$. This code is an $[n,r]$ LCD code.
      \end{algorithm}
    \begin{proposition}\label{self} Let $n\neq 0$ in a field $\F$ and $H$ a Hadamard matrix with $HH\T = nI_n$. Then there exists $\al \in \F$ or in a quadratic extension of $\F$ such that $(I_n,\al H)$ generates a self-dual code.
      
    \end{proposition}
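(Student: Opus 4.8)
The plan is to compute directly when the stacked matrix $(I_n, \al H)$ generates a self-dual code and solve for the scalar $\al$. A code of length $2n$ with generator matrix $G = (I_n \mid \al H)$ has dimension $n$, so it is self-dual precisely when it is self-orthogonal, i.e. when $GG\T = 0$. Now $GG\T = I_nI_n\T + (\al H)(\al H)\T = I_n + \al^2 HH\T = I_n + \al^2 n I_n = (1 + \al^2 n)I_n$, using the defining identity $HH\T = nI_n$. Hence $GG\T = 0$ if and only if $\al^2 = -\f{1}{n}$ (note $n \neq 0$ in $\F$, so $n$ is invertible).

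The remaining point is simply the existence of such an $\al$. The element $-\f1n$ either is a square in $\F$, in which case $\al \in \F$, or it is a non-square, in which case $\al$ lies in the quadratic extension $\F(\sqrt{-1/n})$; in either case $(I_n, \al H)$ has entries in $\F$ or in that quadratic extension and generates a self-dual $[2n,n]$ code over the appropriate field. One should also note that $G$ has full rank $n$ regardless of $\al$, since the first $n$ columns already form $I_n$, so the code genuinely has dimension $n$ and self-orthogonality upgrades to self-duality.

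There is essentially no obstacle here: the only mild subtlety is the possible need to pass to a quadratic extension, which is exactly why the statement is phrased with the disjunction ``in $\F$ or in a quadratic extension of $\F$.'' If one wants to stay over $\F$ itself, the condition is that $-1/n$ (equivalently $-n$) be a square in $\F$; over $GF(q)$ with $q$ odd this happens for instance whenever $-1$ is a square (so $q \equiv 1 \bmod 4$) and $n$ is a square, and more generally can be arranged by choosing the field suitably. The proof is thus a one-line matrix identity followed by solving a quadratic, and it is worth recording the explicit value $\al^2 = -n^{-1}$ since it reappears when computing distances and when writing the code in systematic form as in Proposition \ref{self}.
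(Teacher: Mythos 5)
Your proof is correct and follows essentially the same route as the paper: compute $GG\T=(1+\al^2 n)I_n$, choose $\al$ with $\al^2=-n^{-1}$ in $\F$ or a quadratic extension, and conclude self-duality. The only cosmetic difference is that the paper finishes by exhibiting $\begin{ssmatrix}I\\ \al H\T\end{ssmatrix}$ as a rank-$n$ control matrix, whereas you use the equivalent dimension count (self-orthogonal of dimension $n$ in length $2n$ implies self-dual); your explicit remark that $G$ has full rank because of the $I_n$ block is a worthwhile addition.
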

    \begin{proof} Let $I=I_n$.
      Now $(I,\al H)\begin{ssmatrix}I \\ \al H\T \end{ssmatrix} = I +\al^2nI = (1+\al^2n)I$. Now $(1+x^2n=0)$ has a solution in $\F$ or else $(1+x^2n)$ is irreducible over  $\F$. Thus in $\F$ or in a quadratic extension of $F$ there exists an $\al$ such that $(1+\al^2 n) = 0$. Then $(I,\al H) \begin{ssmatrix}I \\ \al H\T \end{ssmatrix}= 0$ and so $K\T =\begin{ssmatrix}I \\ \al H\T \end{ssmatrix}$  of rank $n$ is a control matrix. Thus $K= (I,\al H) $ generates the dual of the code and hence the code is self-dual.
    \end{proof}

    The distance of the codes in Proposition \ref{self} may be worked out from Proposition \ref{sys} but not always easily. From the self-dual code, by the CSS construction, a quantum error-correcting code may be constructed with the same distance.
    
An example of how this Proposition \ref{self} works is given with Prototype example \ref{pro12} which uses a Hadamard matrix of size $12$.
\begin{algorithm}\label{alg2} {\em Construct  $[2n,n]$ self-dual codes using Hadamard matrices of size $n$:}

  Let $\F=GF(p)$ where $p\nmid n$ and  $A= (I_n,\al H)$ where $\al$ satisfies $(1+\al^2n=0)$ in $\F$ or in a quadratic extension of $\F$. Then code generated by $A$ is self-dual.
  \end{algorithm}
\begin{proposition}\label{same} Let $H$ be a Hadamard matrix of size $n$ and $n\neq 0$ in a field $\F$. Suppose $H$ has the form 
      $H= \begin{ssmatrix} A \\ B \end{ssmatrix}$ implying $  \begin{ssmatrix} A \\ B \end{ssmatrix} \begin{ssmatrix}  A\T & B\T \end{ssmatrix}=nI_n$ where $n=2m$ and $A$ and $B$ have size $m\ti n$. 
  Let $G(z)=A+iBz$ where $i=\sqrt{-1}$ in $\F$ or in a quadratic
extension of $F$. Then $G(z)$ generates a self-dual  convolutional (non-catastrophic) code with parameters $(2m,m,m;1,d)$ where $d=d(A) + d(B)$.
  \end{proposition}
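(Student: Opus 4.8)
\noindent The plan is to read everything off the block decomposition of $HH\T=nI_n$. Write $H=\begin{ssmatrix}A\\B\end{ssmatrix}$ with $A,B$ of size $m\ti n$; then $\begin{ssmatrix}A\\B\end{ssmatrix}\begin{ssmatrix}A\T&B\T\end{ssmatrix}=nI_n$ splits into the four identities $AA\T=BB\T=nI_m$ and $AB\T=BA\T=0$. Since $n=2m$ is even and $n\neq0$ in $\F$, the characteristic of $\F$ is odd, so the $\pm1$ entries of $H$ are units; in particular $A$ and $B$ have full rank $m$, and $i=\sqrt{-1}$ lives in $\F$ or in a quadratic extension, over which I work throughout. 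Using $AA\T=nI_m$ and $BA\T=0$ one gets $G(z)\cdot\tfrac1nA\T=\tfrac1n(AA\T+iBA\T z)=I_m$, a constant (hence polynomial) right inverse; this shows simultaneously that $\rank G(z)=m$ and that the code is non-catastrophic. The coefficient of $z$ in $G(z)$ is $iB$, of rank $m$, so $G(z)$ is row-reduced; each row $A_j+iB_jz$ has $B_j\neq0$, hence row degree $1$, so $\de=\sum_i\de_i=m$ and $\mu=1$. This pins down the parameters $(2m,m,m;1,d)$.

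\noindent For self-duality the one genuine computation is $G(z)G\T(z^{-1})=(A+iBz)(A\T+iB\T z^{-1})=AA\T-BB\T+i\bigl(BA\T z+AB\T z^{-1}\bigr)=nI_m-nI_m+0=0$, using the four identities. Hence $H\T(z):=z\,G\T(z^{-1})=zA\T+iB\T$ is a polynomial matrix with $G(z)H\T(z)=0$, and it has rank $m$ since $H\T(0)=iB\T$ is nonsingular; moreover $\tfrac{-i}{n}B\cdot H\T(z)=I_m$, so $H\T(z)$ is basic and the module-theoretic dual recipe of Section \ref{note} applies without catastrophe issues. That dual is generated by $H(z^{-1})z^{\mu}$ with $\mu=1$; since $H(z)=\bigl(H\T(z)\bigr)\T=zA+iB$, this equals $(z^{-1}A+iB)z=A+iBz=G(z)$. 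Therefore $\C^{\perp}$ is generated by $G(z)$ itself, i.e.\ $\C^{\perp}=\C$.

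\noindent For the free distance, basicness of $G(z)$ makes every nonzero codeword equal to $u(z)G(z)$ for a unique $u(z)\in\F[z]^m$, and dividing $u(z)$ by the largest power of $z$ dividing it merely shifts the codeword by a power of $z$ and so leaves its weight unchanged; thus I may assume $u(z)=\sum_{j=0}^t u_jz^j$ with $u_0\neq0$ and $u_t\neq0$. The $z^0$-block of $u(z)G(z)$ is $u_0A$, a nonzero word of $\mathcal A=\langle A\rangle$, of weight $\geq d(A)$; the $z^{t+1}$-block is $iu_tB$, a nonzero word of $\langle B\rangle$, of weight $\geq d(B)$. If $t=0$ the codeword is $u_0A+iu_0Bz$, of weight $\operatorname{wt}(u_0A)+\operatorname{wt}(u_0B)\geq d(A)+d(B)$. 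If $t\geq1$, the $z^1$-block is $u_1A+iu_0B$, which cannot be zero: that would force $u_1A=-iu_0B\in\mathcal A\cap\mathcal A^{\perp}=0$ (Proposition \ref{lcd} gives $\langle B\rangle=\mathcal A^{\perp}$ and that $\mathcal A$ is LCD), hence $u_0=0$; so there are at least three nonzero blocks and the weight is $\geq d(A)+1+d(B)$. In all cases $\operatorname{wt}\bigl(u(z)G(z)\bigr)\geq d(A)+d(B)$, so $d_f\geq d(A)+d(B)$.

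\noindent It remains to see that this bound is attained, which by the above happens exactly when some constant input $u(z)=u_0$ realises the minimum weight of $\mathcal A$ and of its dual $\langle B\rangle$ in one stroke, producing a codeword $u_0A+iu_0Bz$ of weight precisely $d(A)+d(B)$. Exhibiting such a $u_0$ --- equivalently, showing $\min_{u_0\neq0}\bigl(\operatorname{wt}(u_0A)+\operatorname{wt}(u_0B)\bigr)=d(A)+d(B)$ rather than something strictly larger --- does not follow formally from the preceding and is the step I expect to be the main obstacle; it has to be extracted from the combinatorics of the specific Hadamard matrix in use (which is presumably part of why non-separable Hadamard matrices are preferred in the applications), and it is exactly what upgrades the statement from the lower bound just proved to the claimed equality $d=d(A)+d(B)$.
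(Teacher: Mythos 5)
Your treatment of self-duality and non-catastrophicity is essentially the paper's own argument: the same control matrix $iB\T + A\T z$ (your $zA\T+iB\T$), the same module-theoretic dual $H(z^{-1})z = A+iBz = G(z)$ (the paper has a small typo, writing $A+iB$), and the same right polynomial inverse $\tfrac{1}{n}A\T$ coming from $(A+iBz)A\T=nI_m$. The one substantive difference is the distance: the paper declares it ``straightforward and omitted,'' whereas you supply the block-by-block argument giving $d_f\geq d(A)+d(B)$, which is correct (the $z^0$- and top-degree blocks are nonzero words of $\langle A\rangle$ and $\langle B\rangle$ respectively, and your use of $\langle A\rangle\cap\langle B\rangle=0$ to handle the middle blocks is sound). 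Your closing reservation is also well taken: this argument only yields the lower bound, and the claimed equality $d=d(A)+d(B)$ needs a single nonzero $u_0$ with $\operatorname{wt}(u_0A)=d(A)$ and $\operatorname{wt}(u_0B)=d(B)$ simultaneously, which does not follow formally from anything proved and is not addressed in the paper either; as stated without such a witness, only $d\geq d(A)+d(B)$ is established.
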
 
\begin{proof}
  Now $G(z)(iB\T + A\T z)= (A+iBz)(iB\T + A\T z) = 0+nI_mz -nI_mz+0= 0$ and so $H\T(z)=(iB\T + A\T z)$ is a control matrix. Hence $H(z^{-1})z=A+iB$ generates the dual of the code and so the code is self-dual. Also
      $(A+iBz)A\T = nI_m$ and so $(A+iBz)$ has a right polynomial inverse and thus the code generated by $G(z)$ is non-catastrophic.

      The proof of the distance is straight forward and omitted.
\end{proof}
\begin{algorithm}\label{alg3} {\em Construct self-dual convolutional codes from Hadamard matrices.}

  Let $H$ be a Hadamard matrix of size $n=2m$ and $\F= GF(p)$ where $p\nmid n$. Let $A$ consist of any $m$ rows of $H$ and $B$ consist of the other $m$ rows of $H$. Define $G(z)=A+iB$ where $i=\sqrt{-1}$ in $\F$ or in a quadratic extension of $\F$. Then $G(z)$ generates a self-dual convolutional code with parameters $(2m,m,m;1,d)$ where $d=d(A)+d(B)$.
  \end{algorithm}
\begin{proposition}\label{diff} Let $H$ be a Hadamard matrix of size $n$ so that $HH\T=nI_n$ and $n\neq 0$ in a field $\F$. Suppose $H$ has the form 
      $H= \begin{ssmatrix} A \\ B \end{ssmatrix}$ implying   $  \begin{ssmatrix} A \\ B \end{ssmatrix} \begin{ssmatrix}  A\T & B\T \end{ssmatrix}=nI_n$ where  $A$ has size $r\ti n$ and $B$ has size $(n-r)\ti n$ with $r>(n-r)$. Let $t=(2r-n)$ and define $B_1= \begin{smatrix} 0_{t\ti n} \\ B\end{smatrix}$. Then
   $G(z) = A+iB_1z$ generates a convolutional dual-containing $(n,r,n-r;1,d)$ code $\C$ where $i=\sqrt{-1}$ in $\F$ or in a quadratic extension of $\F$.

  \end{proposition}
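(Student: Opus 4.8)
The plan is to imitate the proof of Proposition \ref{same}, the only extra bookkeeping being the zero block $0_{t\ti n}$ sitting inside $B_1$ and a splitting of $A$ into its first $t=2r-n$ rows, say $A_1$ (of size $t\ti n$), and its last $n-r$ rows, say $A_2$ (of size $(n-r)\ti n$). First I would record the orthogonality relations forced by $HH\T=nI_n$: one reads off $A_1A_1\T=nI_t$, $A_2A_2\T=nI_{n-r}$, $A_1A_2\T=0$, $AB\T=0$, $BA\T=0$ and $BB\T=nI_{n-r}$. Since $B_1=\begin{smatrix}0_{t\ti n}\\ B\end{smatrix}$, these immediately give $B_1A\T=0$, $B_1A_2\T=0$ and, crucially, $B_1B\T=\begin{smatrix}0_{t\ti(n-r)}\\ nI_{n-r}\end{smatrix}=AA_2\T$.

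Next I would check that $G(z)=A+iB_1z$ is a legitimate reduced basic generator matrix with the stated parameters. Its rank is $r$ because $G(0)=A$ and $H$ is invertible. The first $t$ rows of $G(z)$ are the constant rows of $A_1$ and so have row degree $0$, while the remaining $n-r$ rows are of the form (a row of $A_2$)$\,+\,iz\cdot$(a $\pm1$ row of $B$) and so have row degree $1$; hence $\de=n-r$ and the memory is $\mu=1$, giving parameters $(n,r,n-r;1,d)$ with $d$ the free distance. For basicness (hence non-catastrophicity) I would exhibit the polynomial right inverse $\f1nA\T$, since $G(z)\cdot\f1nA\T=\f1n(AA\T+iB_1A\T z)=I_r$. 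For reducedness I would note that the matrix of leading row coefficients of $G(z)$ is $\begin{ssmatrix}A_1\\ iB\end{ssmatrix}$, whose rows are, up to the nonzero scalar $i$, rows of the invertible matrix $H$ and hence linearly independent, so it has full rank $r$.

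Then I would produce the control matrix $H\T(z)=iB\T+A_2\T z$, of size $n\ti(n-r)$ and of full column rank $n-r$ since $B$ has rank $n-r$, and verify with the relations above that
\[ G(z)H\T(z)=i\,AB\T+(AA_2\T)z+i^2(B_1B\T)z+i(B_1A_2\T)z^2=(AA_2\T-B_1B\T)z=0. \]
Hence $H(z)=iB+A_2z$, and since $\C$ has memory $1$ its module-theoretic dual $\C^{\perp}$ is generated by $H(z^{-1})z=A_2+iBz$. The punchline is that $G(z)=A+iB_1z=\begin{ssmatrix}A_1\\ A_2+iBz\end{ssmatrix}$, so the rows of the dual generator $A_2+iBz$ are exactly the last $n-r$ rows of $G(z)$; therefore the submodule they span lies inside the submodule spanned by the rows of $G(z)$, i.e. $\C^{\perp}\subseteq\C$, which is what dual-containing means.

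I expect the only steps needing a little care to be the reducedness check — which is where one really uses that any selection of rows of the invertible matrix $H$ is linearly independent, so $\begin{ssmatrix}A_1\\ iB\end{ssmatrix}$ has rank $r$ — and the bookkeeping around the convention $\C^{\perp}=\langle H(z^{-1})z^{m}\rangle$ with $m$ the memory of $\C$: here $m=1$, and the single factor $z$ is precisely what clears the $z^{-1}$ and leaves the \emph{polynomial} matrix $A_2+iBz$, whose rows visibly sit inside $G(z)$. Everything else is the same block multiplication as in Proposition \ref{same}, so the free-distance statement can be left, as there, to a routine check.
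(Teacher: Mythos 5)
Your proof is correct and follows essentially the same route as the paper's: your $A_2$ is the paper's $C_1\T$, you exhibit the same control matrix $iB\T+A_2\T z$, and you conclude dual-containment by observing that the dual generator $A_2+iBz$ consists of the last $n-r$ rows of $G(z)$. The only differences are that you spell out the reducedness/basicness checks and the orthogonality bookkeeping that the paper leaves implicit.
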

  
\begin{proof}  Define $0_t=0_{t\ti n}$. Thus
$B_1 = \begin{ssmatrix} 0_t \\ B\end{ssmatrix}$ is an $r\ti n$ matrix. Now $A\T$ is an $n\ti r$ matrix and thus has the form $A\T=(X,C_1)$ where $C_1$ has size $n\ti (n-r)$ and $X$ has size $n\ti (2r-n)$. As $AA\T = nI_r$ then $AC_1= n\begin{pmatrix} 0_{(2r-n) \ti (n-r)} \\ I_{(n-r)\ti (n-r)}\end{pmatrix}$ and also $B_1B=n\begin{pmatrix} 0_{(2r-n) \ti (n-r)} \\ I_{(n-r)\ti (n-r)}\end{pmatrix}$.
Now    $A= \begin{ssmatrix}X\T \\ C_1\T\end{ssmatrix}, B_1=\begin{ssmatrix} 0_t \\ D\T \end{ssmatrix}$. Then  $(A+iB_1z) (iB\T+C_1z) = 0$ so $H\T(z)=(iB\T+C_1z)$ is a control matrix and $H(z^{-1})z=C_1\T+iB$ generates the dual of the code. The code generated by $C_1\T+iB$ is easily seen to be contained in the code generated by $(A+iB_1)$ and so the code generated by $(A+iB_1)$ is dual-containing. That the code is non-catastrophic follows in a similar manner to the proof in Proposition \ref{same}. 

    \end{proof}

Suppose $H=\begin{ssmatrix} P \\ Q \end{ssmatrix}$ so that
$\begin{ssmatrix} P \\ Q \end{ssmatrix}\begin{ssmatrix} P\T & Q\T \end{ssmatrix} = nI_n$ where $P$ has size $r$ with $r> \frac{n}{2}$. Then this can be written $\begin{ssmatrix} A \\ B \\C \end{ssmatrix}\begin{ssmatrix} A\T & B\T &C\T\end{ssmatrix} = nI_n$ where $C$ has the same size as $B$. Another way to look at Proposition \ref{diff} is as follows:

  \begin{proposition}\label{diff1} Let $H$ be a Hadamard $n\ti n$ matrix and $n \neq 0$ in $\F$. Suppose $H=\begin{ssmatrix} A \\ B \\ C \end{ssmatrix}$ where  $C$ has the same size as $B$ and
    thus   $\begin{ssmatrix} A \\ B \\C \end{ssmatrix}\begin{ssmatrix} A\T & B\T &C\T\end{ssmatrix} = nI_n$.

      Then $G(z)=\begin{ssmatrix} A \\ B \end{ssmatrix} + i \begin{ssmatrix}\underline{0} \\ C \end{ssmatrix}$ defines a dual-containing convolutional $(n,r,n-r;1,d)$ code where $i=\sqrt{-1}$ in $\F$ or in a quadratic extension of $\F$, $\underline{0}$ is the zero matrix of the same size as $A$ and $r\ti n$ is the size of $\begin{ssmatrix} A \\ B \end{ssmatrix}$.
 
  \end{proposition}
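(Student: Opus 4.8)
The plan is to observe that Proposition~\ref{diff1} is just Proposition~\ref{diff} with the top block there split as $\begin{ssmatrix}A\\ B\end{ssmatrix}$; once the block sizes are matched up, the conclusion — and with it the explicit control matrix, dual generator and right polynomial inverse — is inherited, so that essentially no new computation is needed. (I read the generator as $G(z)=\begin{ssmatrix}A\\ B\end{ssmatrix}+iz\begin{ssmatrix}\underline 0\\ C\end{ssmatrix}$, the vertical stacking and the factor $z$ being forced by the stated parameters $(n,r,n-r;1,d)$.)

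First I would pin down the dimensions. Let $B$ and $C$ each have $n-r$ rows (they have the same size); since $\begin{ssmatrix}A\\ B\end{ssmatrix}$ has size $r\ti n$, the number of rows of $A$ is $r-(n-r)=2r-n=:t$, exactly the integer $t$ of Proposition~\ref{diff}. Put $\widetilde A=\begin{ssmatrix}A\\ B\end{ssmatrix}$ (size $r\ti n$) and $\widetilde B=C$ (size $(n-r)\ti n$). Expanding $HH\T=nI_n$ block by block gives $AA\T=nI_t$, $BB\T=CC\T=nI_{n-r}$ and $AB\T=AC\T=BC\T=0$, whence $\widetilde A\widetilde A\T=nI_r$ is block diagonal, $\widetilde A\widetilde B\T=0$ and $\widetilde B\widetilde B\T=nI_{n-r}$; that is, $\begin{ssmatrix}\widetilde A\\ \widetilde B\end{ssmatrix}\begin{ssmatrix}\widetilde A\T & \widetilde B\T\end{ssmatrix}=nI_n$, with $r>n-r$ as soon as $t>0$ — precisely the hypotheses of Proposition~\ref{diff}. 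Next, the matrix $B_1$ built in Proposition~\ref{diff} from $(\widetilde A,\widetilde B)$ is $B_1=\begin{ssmatrix}0_{t\ti n}\\ \widetilde B\end{ssmatrix}=\begin{ssmatrix}0_{t\ti n}\\ C\end{ssmatrix}$, and this is literally $\begin{ssmatrix}\underline 0\\ C\end{ssmatrix}$ since $\underline 0$ has the size of $A$, namely $t\ti n$. Hence $G(z)=\widetilde A+iB_1z$ is exactly the generator of Proposition~\ref{diff}, which therefore gives at once that $G(z)$ generates a non-catastrophic dual-containing convolutional code; a glance at the row degrees ($t$ rows of degree $0$, $n-r$ rows of degree $1$) confirms the parameters $(n,r,n-r;1,d)$.

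There is no genuinely hard step; the only care needed is the block-size bookkeeping — deriving that $A$ has $2r-n$ rows so that $\underline 0$ matches the zero-padding $0_{t\ti n}$ of Proposition~\ref{diff}, and noting that it is the vanishing of the cross-blocks $AB\T,AC\T,BC\T$ (from the Hadamard identity) that makes $\begin{ssmatrix}A\\ B\end{ssmatrix}$ a legitimate generator block. The degenerate case $t=0$ (that is, $r=n/2$ with $A$ vacuous) should be noted separately: there $G(z)=B+iCz$ and the statement is the self-dual case already supplied by Proposition~\ref{same} (a self-dual code is dual-containing). Finally, if one prefers to argue directly rather than invoke Proposition~\ref{diff}, I would repeat its computation in the present notation: with $\widetilde A\T=(A\T,\,B\T)$, check $(\widetilde A+iB_1z)(iC\T+B\T z)=0$ using $\widetilde AC\T=0$, $\widetilde AB\T=n\begin{ssmatrix}0\\ I\end{ssmatrix}=B_1C\T$ and $B_1B\T=0$, so that $iC\T+B\T z$ is a control matrix and the dual is generated by $B+iCz$, which is precisely the bottom $n-r$ rows of $G(z)=\begin{ssmatrix}A\\ B+iCz\end{ssmatrix}$ (hence the dual lies inside the code); and $G(z)\widetilde A\T=nI_r$, valid because $B_1\widetilde A\T=0$, provides the constant right inverse $\tfrac1n\widetilde A\T$ and so non-catastrophicity.
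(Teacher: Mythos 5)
Your proof is correct and takes essentially the same route as the paper: your closing direct computation (control matrix $iC\T+B\T z$, dual generated by $B+iCz$ sitting inside the bottom rows of $G(z)$, constant right inverse $\tfrac1n(A\T\;B\T)$) is exactly the paper's argument, while your opening reduction to Proposition~\ref{diff} is precisely the equivalence the paper asserts without proof. Your reading of the statement (vertical stacking, the factor $z$ on the second block) correctly repairs the typos in the printed generator, and your $B+iCz$ is the correct form of the dual generator that the paper abbreviates as $B+iC$.
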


  This is equivalent to Proposition \ref{diff} but a proof is given as it's instructive for the algorithm that follows.

  \begin{proof}  Note that $BB\T = nI_t=CC\T$ for some $t$. Use $0$ for a zero matrix whose size is clear from the context. 
    Now $(\begin{ssmatrix} A \\ B \end{ssmatrix} + i \begin{ssmatrix} 0 \\ C \end{ssmatrix}z)(iC\T + B\T z)= 0+\begin{ssmatrix}0\\ I_t \end{ssmatrix} - \begin{ssmatrix}0\\ I_t \end{ssmatrix} + 0 = 0$ and so $H\T(z)= iC\T + B\T z$ is a control matrix for the code. Hence $H(z^{-1})z = B + iC$ generates the dual of the code. It is easy to see that the code is dual-containing. Also a right inverse for $G(z)$ is readily written down and so the code is non-catastrophic.  \end{proof}
             
This can be used to find or estimate the distances of the dual-containing codes derived. 
\begin{algorithm}\label{alg4} {\em Construct rate $\frac{r}{n}$, $n> r \geq n/2$,
  dual-containing convolutional codes from Hadamard matrices of size $n$.}

  Let $\F=GF(p)$ where $p\nmid n$, $A$ consist of $r$ rows of $H$ and $B$ consist of the other $(n-r)$ rows of $H$. Define $G(z) = A + iB_1z$ where $B_1= \begin{ssmatrix} 0_{t\ti n} \\ B \end{ssmatrix}$, $t= 2r-n$ and $i=\sqrt{-1}$ in $\F$ or in a quadratic extension of $\F$. Then $G(z)$ generates a dual-containing convolutional code $(n,r,n-r,1,d)$.
\end{algorithm}

The distance $d$ in Algorithm \ref{alg4} can be estimated from Proposition \ref{diff} as follows: Let $A_1$ be the matrix of the first $(2r-n)$ of $A$ in Proposition \ref{diff}; the distance of $\C$ is then $\min\{d(A_1), d(A) + d(\begin{ssmatrix} A_1 \\ B \end{ssmatrix})\}$.


Note that from a dual-containing code, by the CSS construction, a quantum error-correcting codes, QECC,  of the same length and distance as that of the dual-containing code is constructible.

The Hadamard matrix over $GF(3)$ has entries $\{1,-1\}$ which are all the non-zero entries of $GF(3)$. $GF(5)$ has the property that it contains a square root of $(-1)$ as $2=\sqrt{-1}$ in $GF(5)$. But also $GF(3)$ may be extended to $GF(3^2)$ which has a square root of $(-1)$; the significance of $\sqrt{-1}$ is clear from  the  convolutional codes derived as in Propositions \ref{same} and \ref{diff}.

    For characteristic dividing $n$ the rank of a Hadamard $n\ti n$ matrix is then less than $n$. When the characteristic does not divide $n$ then the Hadamard  $n\ti n$ matrix has rank $n$ and its rows are independent; this is used in Proposition \ref{diff} and Proposition \ref{same}. 
    Proposition \ref{self} uses a Hadamard matrix to give a generator matrix of a self-dual code in  {\em systematic form}, \cite{blahut}. The distance can be obtained from the Hadamard matrix as follows. 
    \begin{proposition}\label{sys}(Proposition 3.8 in \cite{hurleyultimate}.) Let $\C$  be the code generated by $G=(I_n,P)$. Suppose the code generated by any $s$ rows of $P$ has distance $ \geq (d-s)$ and for some choice of $r$ rows the code generated by these $r$ rows has distance exactly $(d-r)$, then the distance of $\C$ is $d$.    
    \end{proposition}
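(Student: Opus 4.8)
The plan is to work directly with the systematic shape of $G=(I_n,P)$. Every codeword of $\C$ is uniquely $(x,xP)$ for some $x\in\F^n$, with Hamming weight $\mathrm{wt}(x)+\mathrm{wt}(xP)$. The key observation is that if $S=\mathrm{supp}(x)$ and $s=\mathrm{wt}(x)=|S|$, then $xP=\sum_{i\in S}x_iP_i$ lies in the linear code $C_S$ generated by the $s$ rows of $P$ indexed by $S$, and $xP\neq 0$ whenever $x\neq 0$. The latter uses that $P$ has full row rank, which holds in all the applications here since there $P=\al H$ with $H$ an invertible Hadamard matrix over $\F$ (the hypothesis $n\neq 0$ in $\F$); I would record this full-rank assumption explicitly, since without it a single relation among a few rows of $P$ would create a low-weight codeword $(x,0)$ and the conclusion could fail.

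First I would prove $d(\C)\geq d$. Take a nonzero codeword $(x,xP)$ and put $s=\mathrm{wt}(x)\geq 1$. If $s\geq d$ the weight is already $\geq s\geq d$ and there is nothing to do. If $1\leq s\leq d-1$, apply the first hypothesis to the set $S=\mathrm{supp}(x)$ of size $s$: the code $C_S$ has minimum distance $\geq d-s\geq 1$, and $xP$ is a \emph{nonzero} codeword of $C_S$, so $\mathrm{wt}(xP)\geq d-s$; hence $\mathrm{wt}(x)+\mathrm{wt}(xP)\geq s+(d-s)=d$. Thus every nonzero codeword of $\C$ has weight at least $d$.

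Next I would prove $d(\C)\leq d$. By the second hypothesis there is a set $R$ of $r$ rows of $P$ such that $C_R$ has minimum distance exactly $d-r$; choose a minimum-weight nonzero $w\in C_R$, so $\mathrm{wt}(w)=d-r$, and write $w=xP$ with $x$ supported on $R$, whence $\mathrm{wt}(x)\leq r$ and $x\neq 0$. Then $(x,w)=(x,xP)$ is a nonzero codeword of $\C$ of weight $\mathrm{wt}(x)+\mathrm{wt}(w)\leq r+(d-r)=d$. Combining the two bounds gives $d(\C)=d$.

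The only genuinely delicate point, which I would flag as the main obstacle, is the degenerate case $xP=0$ for a nonzero $x$ of small weight (equivalently, a short linear dependence among rows of $P$): this would produce a codeword $(x,0)$ of weight below $d$ and destroy the lower bound, and it is not excluded by the stated hypotheses on the $C_S$ alone. This is precisely why the statement is applied only with $P=\al H$ for an invertible $H$, where $P$ has full row rank; I would make that assumption explicit in the write-up. Everything else is routine bookkeeping with supports and with the definition of minimum distance of the auxiliary codes $C_S$.
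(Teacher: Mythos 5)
The paper does not actually prove this proposition; it is quoted verbatim from Proposition 3.8 of \cite{hurleyultimate} and used as a black box, so there is no in-paper argument to compare yours against. On its own terms your proof is correct and is the natural one: writing each codeword as $(x,xP)$, splitting its weight as $\mathrm{wt}(x)+\mathrm{wt}(xP)$, using the hypothesis on the code generated by the rows of $P$ indexed by $\mathrm{supp}(x)$ to get the lower bound $d$, and using the extremal choice of $r$ rows to exhibit a codeword of weight at most $d$. Your flagged caveat is also a genuine observation about the statement as literally written: if some $s\le d-1$ rows of $P$ were linearly dependent, $(x,0)$ would be a nonzero codeword of weight $s<d$ and the conclusion would fail, and the hypotheses on the minimum distances of the subcodes do not by themselves rule this out. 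In this paper the proposition is only invoked with $P=\al H$ for $H$ Hadamard and $n\neq 0$ in $\F$, so $P$ is invertible and the degenerate case cannot occur; making that full-row-rank assumption explicit, as you do, is the right call. (One small remark you could add: combining your two bounds shows the witness codeword in the upper-bound step has weight exactly $d$, which forces $\mathrm{wt}(x)=r$ there, but nothing in the proof depends on this.)
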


  For a matrix $K$ the following notation, as suggested by \cite{gap}, is adopted: $K[s..t][u..v]$ is the submatrix of $K$ consisting of the rows $s$ to $t$ of $K$  and the columns $u$ to $v$ of $K$.

  \begin{lemma}\label{rank} Let $H=H(n)$ be a Hadamard matrix of size $n$ and and let $p\neq 2$ be a prime divisor of $n$. Then $\rank(H) \leq \frac{n}{2}$ in $\Z_p=GF(p)$. \end{lemma}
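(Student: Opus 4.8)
The plan is to exploit the defining identity $HH\T = nI_n$ together with the hypothesis $p \mid n$, working entirely over the field $\Z_p = GF(p)$. Reducing the identity modulo $p$ gives $HH\T = nI_n = 0$ in $GF(p)$, since $p \mid n$. This means that every row of $H\T$ lies in the right null space of $H$; equivalently, the row space of $H$ is contained in its own orthogonal complement with respect to the standard bilinear form on $GF(p)^n$. So the first step is simply to observe that $C \subseteq C^{\perp}$, where $C$ is the row space (code) generated by $H$ over $GF(p)$.

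The second step is the standard dimension count for the orthogonal complement of a subspace of $GF(p)^n$: for any linear code $C$ of length $n$ over a field, $\dim C + \dim C^{\perp} = n$. Combining this with the containment $C \subseteq C^{\perp}$ from the first step yields $\dim C \leq \dim C^{\perp} = n - \dim C$, hence $2\dim C \leq n$, i.e. $\dim C = \rank(H) \leq \frac{n}{2}$ in $GF(p)$. That is exactly the claimed bound.

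The only place any care is needed is in making sure the bilinear form argument is the right one: the rows of $H$ are vectors in $GF(p)^n$, the $(i,j)$ entry of $HH\T$ is the standard inner product of row $i$ and row $j$ of $H$, and this vanishing for all $i,j$ is precisely the statement $\mathrm{rowspace}(H) \subseteq \mathrm{rowspace}(H)^{\perp}$. One should also note that the hypothesis $p \neq 2$ is not actually needed for the inequality $\rank(H) \leq \frac{n}{2}$ itself — it comes in only because Hadamard matrices of odd order (other than $1$) do not exist, so the interesting prime divisors of $n = 4m$ are handled uniformly, and $p=2$ would be a degenerate case where the form is alternating; the excerpt restricts to $p \neq 2$ to stay in the clean setting. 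I do not anticipate a genuine obstacle here; the result is an immediate consequence of "self-orthogonal codes have dimension at most half the length," and the entire content is recognizing that $p \mid n$ forces $H$ to generate such a self-orthogonal code.
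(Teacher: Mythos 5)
Your proof is correct and follows essentially the same route as the paper: both reduce $HH\T = nI_n$ modulo $p$ to obtain $HH\T = 0$ over $GF(p)$ and then make a dimension count yielding $2\rank(H)\leq n$. The paper performs the count via Sylvester's rank inequality $\rank(A)+\rank(B)-n\leq \rank(AB)$ applied to $A=H$, $B=H\T$, whereas you phrase it as self-orthogonality of the row space together with $\dim \C+\dim \C^{\perp}=n$; these are the same fact in this setting, and your side remark that $p\neq 2$ is not actually needed for the inequality is accurate.
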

        \begin{proof} Modulo $p$, $HH\T = nI_n = 0_{n\ti n}$. If $A$ is an $m\ti n$ matrix and $B$ is $n \ti k$, then $\rank(A)+\rank(B) - n \leq \rank(AB) $. Let $H= A, B=H\T$. Then
          $\rank(H)+ \rank(H\T)- n \leq \rank HH\T =0$. But $\rank (H)=\rank(H\T)$ and so $2\rank(H)\leq n$ and hence $\rank(H) \leq \frac{n}{2}$.
        \end{proof}
        It is easy to check the rank as required in such cases when $p\mid n$. In many cases it works  that the $\rank$ is actually $\frac{n}{2}$ but it's not necessary that the first $\frac{n}{2}$ rows are independent.

    \begin{proposition}\label{dual1} Let $H=H(2n)$ be a Hadamard matrix of size $2n$ and $p\mid n$, $p$ a prime, $p\neq 2$. Suppose over $GF(p)$ that $H$ has rank $n$ and  let $A$ be an $n\ti 2n$ submatrix of rank $n$. Then $A$ generates a self-dual $[2n,n]_p$ code over $GF(p)$.
    \end{proposition}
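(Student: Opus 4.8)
The plan is to show that the code $\mathcal{A}$ generated by $A$ is self-orthogonal and has dimension exactly $n$, and then invoke the standard fact that a self-orthogonal code whose dimension is half its length is self-dual.

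First I would note that $p\mid n$ forces $p\mid 2n$, so $2n=0$ in $GF(p)$ and the Hadamard identity $HH\T = 2nI_{2n}$ collapses to $HH\T = 0$ over $GF(p)$. Read entry-wise, this says that the standard bilinear form over $GF(p)$ pairs any two rows of $H$ to zero, including the pairing of a row with itself (which over $\Z$ equals $2n\equiv 0\pmod p$). Since $A$ consists of $n$ of the rows of $H$, the matrix $AA\T$ is, up to a reindexing of rows and columns, a submatrix of $HH\T$, and therefore $AA\T=0$ over $GF(p)$. Hence every row of $A$ is orthogonal to every row of $A$, i.e. $\mathcal{A}\subseteq \mathcal{A}^{\perp}$.

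Finally, the hypothesis $\rank(A)=n$ — which is available because $\rank(H)=n$ over $GF(p)$ (consistent with Lemma \ref{rank}, $p\neq 2$), so $n$ independent rows of $H$ can indeed be selected — gives $\dim\mathcal{A}=n$, and therefore $\dim\mathcal{A}^{\perp}=2n-n=n=\dim\mathcal{A}$. Combined with $\mathcal{A}\subseteq\mathcal{A}^{\perp}$ this forces $\mathcal{A}=\mathcal{A}^{\perp}$, so $\mathcal{A}$ is a self-dual $[2n,n]_p$ code.

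I do not expect a genuine obstacle; the computation $AA\T=0$ is immediate from $p\mid 2n$, and the rest is a dimension count. The only point worth flagging is that self-orthogonality by itself is insufficient: one genuinely needs the rank-$n$ hypothesis to pin $\dim\mathcal{A}=n$ (a lower-rank $A$ would give a self-orthogonal but not self-dual code), and the restriction $p\neq 2$ is exactly what makes Lemma \ref{rank} and the stated rank-$n$ assumption the relevant regime.
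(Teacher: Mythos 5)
Your proof is correct and follows essentially the same route as the paper's: both arguments rest on the observation that $AA\T=0$ over $GF(p)$ (since $HH\T=2nI=0$) together with the rank-$n$ hypothesis; the paper packages the dimension count in the language of control matrices ($A\T$ is a control matrix of full rank, so $A$ generates the dual), while you phrase it as self-orthogonality plus $\dim\mathcal{A}=\dim\mathcal{A}^{\perp}=n$. The two are interchangeable, and your closing remark that the rank hypothesis is genuinely needed is a fair point the paper leaves implicit.
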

    \begin{proof} It may be assumed that $A$ consists of the first $n$ rows of $H$ as interchanging rows of a Hadamard matrix results in a Hadamard matrix of the same rank. Thus over $GF(p)$, $HH\T$ has the form $\begin{ssmatrix} A \\ B \end{ssmatrix}\begin{ssmatrix} A\T & B\T \end{ssmatrix} = 0$. Thus $AA\T = 0$. Now $\rank(A)=n$ and thus $\rank(A\T)=n$. Hence $A$ generates a $[2n,n]$ code $\mathcal{A}$ and $A$ is a control matrix of this code. Thus the dual code of $\mathcal{A}$ is generated by $A$ and so $\cal{A}$ is self-dual. 
      \end{proof}

    In Proposition \ref{dual1} any $n$ rows of $H$ which form a matrix of rank $n$ can be used to generate a self-dual code. In many cases the matrix of the first $n$ rows  has rank $n$  and also in many cases a selection of any $n$ rows has rank $n$. The distance may be found  for lengths up to about a 100 by computer and after that algebraic methods are required. 
    
    \paragraph{Applications/examples}
    
    Applications are derived  by applying the Propositions and Algorithms.
    The introduction lists some applications and the following is a further selection. 

    The first application is a small prototype example with $H=H(12)$ and this demonstrates how the different methods for constructing linear block and convolutional codes from Hadamard matrices codes can be developed. 
    
    \begin{protoexample}\label{pro12} Let  $H$ be  a  $12 \ti 12$ Hadamard matrix. This is a good example and  is the first case of a Hadamard matrix with size $n>2$ where $H$ cannot be derived as a Walsh-Hadamard matrix type.
      \begin{itemize} 
\item  Then $HH\T=I_{12}$ is broken up as: $  \begin{ssmatrix} A \\ B \end{ssmatrix} \begin{ssmatrix}  A\T & B\T \end{ssmatrix}=12I_{12}$. Let $A$ consist of the first 6 rows of $H$, $A=K[1..6][1..12]$,  and $B$ consist of the last 6 rows of $H$. Over $GF(3)$ this becomes 
    $  \begin{ssmatrix} A \\ B \end{ssmatrix} \begin{ssmatrix}  A\T & B\T \end{ssmatrix}=0$. Now $A$ (and $B$) have rank $6$ as does $A\T$ giving $AA\T=0$. Thus the code generated by $A$ has dual code generated by $A\T\T=A$ and so is self-dual. The distance of the code, $\mathcal{A}$, generated by $A$ is $6$ and so $\mathcal{A}$ is a $[12,6,6]$ self-dual code over $GF(3)$. This is best possible. 
      The code can correct up to $2$ errors and thus a combination of one or two rows of $A\T$ is unique and can be used to correct up to two errors in a straight-forward manner.

\item Over $GF(5)$ the code generated by $A$ is an LCD code as is the dual code is generated by $B$. Both are  $[12,6,6]_5$ codes over $GF(5)$.

    Over $GF(5)$ define $G(z)= A + i Bz$ where $i=\sqrt{-1}$; in this case $i=2$ as $2^2=4 = -1$ in $GF(5)$. $G(z)$ generates a convolutional memory $1$ code which is non-catastrophic as $(A+iBz)A\T = 6I_6=I_6$ so that $A+iBz$ has a right polynomial inverse. Now $(A+iBz)(iB\T+A\T z) = 0$ so $K\T(z) = iB\T+A\T z$ is a control matrix giving that $K(z^{-1})z = A+iBz$ generates the dual code code and so the code is self-dual. The free distance of the code is the sum of the distances of the codes generated by $A$ and by $B$ which is $12$, see Proposition \ref{same}. Thus a self-dual convolutional $(12,6,6;1,12)_5$ code is obtained. From this a quantum error-correcting convolutional code is obtained with length $12$ and distance $12$ over $GF(5)=\Z_5$. 

\item\label{five} Let $H$ again be a Hadamard $12 \ti 12$ matrix and let $\C$ be the $[24,12]$ code generated by $(I_{12},\al H)$ with $\al$ to be determined.
    Then $(I,\al H)\begin{ssmatrix} I \\ \al H\T \end{ssmatrix} = I+12 \al^2I=
    (1+12\al^2)I$. Require now that $(1+12 \al^2)= 0$ in a field to be decided. In this case $K\T =\begin{ssmatrix} I \\ \al H\T \end{ssmatrix}$, which has rank $12$, is a control matrix and then $K=(I,\al H)$ generates the dual code of $\C$ and so $\C$ is self-dual.

  \item  In item \ref{five}  require that $ 1+ 2\al^2 = 0$ in characteristic $5$ which requires  $2\al^2 = -1 = 4$ which requires $\al^2 =2 $. Now $x^2-2$ is irreducible over $GF(5)$ and so extend $GF(5)$ to $GF(5^2)$ which has an element $\al^2=2$. Then over this field $(I,\al H)$ generates a self-dual code. The length of the code turns out to be $8$ and thus get a $[24,12,8]$ self-dual code over $GF(5^2)$.

      In $GF(7)=\Z_7$, $\al =2$ satisfies $1+12\al^2=0$ and so $(I, 2H)$ generates a self-dual $[24,12,8]_7$ code.  

    \end{itemize}  
    \end{protoexample}
     
\begin{example}\label{higher}  Hadamard matrices can be used to construct higher memory convolutional codes. Let $H$ be a Hadamard $12 \ti 12$ matrix, $A=H[1.3][1..12], B=H[4..6][1..12], C=H[7..9][1..12], D=H[10..12][1..12]$. Then $G(z)=A+Bz+Cz^2+Dz^3$ gives a $(12,3,9;3,24)$ convolutional code. The distance is easily computed as $d(A)=6=d(B)=d(C)=d(D)$ and $d(\begin{ssmatrix}X \\ Y \end{ssmatrix} = 6$ for $X,Y$ different elements of $\{A,B,C,D\}$.
  \end{example}

    Further applications/examples are given below. 

\begin{example}  With $H=H(72)$ any $36$ rows generate a $[72,36,18]_3$  self-dual code. With $H=H[144]$ over $GF(3)$, $72$ rows of $H$ generate a $[144,72,d]_3$ code.  
  \end{example}

 \begin{example}   Let $H=Hadamard(20)$.  Let $A=H[1..10][1..20], B=H[11..20][1..20]$. Over $GF(3)$ the codes generated by both $A$ and $B$ are $[20,10,6]$ LCD codes. Using $G(z)=A+iBz$ gives by Proposition \ref{same} a convolutional  $(20,10,10;1;12)$ self-dual code over $GF(3^2)$ where $i=\sqrt{-1}$ in $GF(3^2)$.  From this a QECC convolutional code of length $20$ and distance $12$ is obtained. 

   Over $GF(5)$ a $[20,10,8]_5$ self-dual code is obtained from $H$. But also $A=H{[1..10]}{[1..20]}$ and $B=H{[11..20]}{[1..20]}$ give $[20,10,8]_5$ codes over $GF(5)$.
 \end{example}
 \begin{example} $H=H(24)$. Over $GF(3)$ this has rank $12$. But also $A=H[1..12][1..24]$ has rank $12$ over $GF(3)$ and generates a self-dual $[24,12, 9]$ code over $GF(3)$.

   Let  $B=H[11..24][1..24]$. Then both $A$ and $B$ generate $[24,12,7]_5$ codes over $GF(5)$. Let $G(z) = A+iBz$ where $i=\sqrt{-1}=2$ in $GF(5)$. Then by Proposition \ref{same},  $G(z)$ generates a convolutional self-dual $[24,12,12;1, 14)$ code over $GF(5)$. From this a QECC convolutional code of length $24$ and distance $14$ is derived over $GF(5)$.
 \end{example}

 \begin{example} Let $H=H(40), A=H[1..20][1..40], B=H[21..40][1..40]$. Over $GF(5)$ $H$ has rank $20$ but $A$ has rank $10$ over $GF(5)$. Now $C=H[1..10][1..40]$ has rank $10$ over $GF(5)$ and generates a $[40,10,16]$ LCD code over $GF(5)$.  Over $GF(3)$ define $G(z)= A +iBz$ where $i=\sqrt{-1}$ in $GF(3^2)$. Then $G(z)$ generates a self-dual convolutional $(40,20,20;1,d)$ code where $d=d(A) + d(B)$, Proposition \ref{same}. From this a QECC convolutional code of length $40$ and distance $d$ is obtained. 
 \end{example}
 \begin{example} $H=H(36), A= H[1..18][1..36], B=H[19..36][1..36]$. Over $GF(3)$ $A$ generates a self-dual $[36, 18,12]_3$ code as does $B$.
   Define $G(z)=A+iBz$ where $i=\sqrt{-1}=2$ in $GF(5)$ and then $G(z)$ generates a convolutional self-dual code $(36,18,18;1,d)$ where $d = d(A)+d(B)$ from which a QECC convolutional code of length $36$ and distance $d$ is obtained.
   \end{example}
    
        \subsection{Ternary codes from Hadamard matrices}\label{ternary} 
        
        Ternary codes, codes over $GF(3)$, have their own interest and are the next are the next obvious cases after binary; see for example \cite{pless} but many more in the literature. Arithmetic in $GF(3)=\Z_3$ is easily implemented. Some applications given previously are ternary codes. 
        The entries of a Hadamard matrix are the non-zero elements of $\Z_3=GF(3)$ and looking at unit-derived codes formed from Hadamard matrices over $GF(3)=\Z_3$ is  particularly interesting and beneficial.  

 Lemma \ref{rank} shows that when $3\mid n$ then $\rank(H)\leq \frac{n}{2}$ in $\Z_3=GF(3)$ for a Hadamard matrix of size $n$. For a Paley-Hadamard matrix $H$ of size $n$,  $\rank(H) = \frac{n}{2}$. Is it true in other cases?   In cases where the rank is $\frac{n}{2}$ a self-dual $[n,\frac{n}{2},d]_3$ code is constructible from any   $\frac{n}{2}$ independent rows of the Hadamard matrix.  

 The following is a consequence of Proposition \ref{dual1}: 
 \begin{proposition} Let $H$ be a Hadamard matrix of size $n$ such that $3\mid n$ and that $\rank(H) = \frac{n}{2}$. Then any submatrix of size $\frac{n}{2}\ti n$ of rank $\frac{n}{2}$ over $GF(3)$ generates a self-dual $[n,\frac{n}{2},d]_3$ code.
 \end{proposition}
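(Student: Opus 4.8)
The plan is to obtain this as a direct specialisation of Proposition~\ref{dual1}. The only thing requiring a moment's care is the bookkeeping of the divisibility hypotheses: Proposition~\ref{dual1} is stated for a Hadamard matrix of size $2n$ with $p\mid n$, while here the matrix has size $n$ with $3\mid n$. So first I would put $m=\f n2$ and note that $m$ is an integer with $3\mid m$. Indeed, a Hadamard matrix of size $n$ with $3\mid n$ must have $n$ a multiple of $4$ (the possible sizes are $1$, $2$ and $4k$, and $3\nmid 1$, $3\nmid 2$); in particular $n$ is even, and since $\gcd(2,3)=1$, from $3\mid n=2m$ we get $3\mid m$.

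With this in hand, $H=H(2m)$ is a Hadamard matrix of size $2m$, the odd prime $p=3$ divides $m$, and by hypothesis $\rank(H)=m$ over $GF(3)$. A submatrix of $H$ of size $\f n2\ti n=m\ti 2m$ and rank $\f n2=m$ is exactly an $m\ti 2m$ submatrix of rank $m$ in the language of Proposition~\ref{dual1}. Applying that proposition with the parameter substitution $2n\mapsto n$, $n\mapsto m$, $p\mapsto 3$ shows that such a submatrix $A$ generates a self-dual $[n,\f n2]_3$ code over $GF(3)$. Writing $d$ for its minimum distance gives the self-dual $[n,\f n2,d]_3$ code claimed.

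Should one prefer not to invoke Proposition~\ref{dual1}, the argument can be repeated from scratch in two lines, following its proof verbatim: over $GF(3)$ one has $HH\T=nI_n=0$ since $3\mid n$; permuting the rows of $H$ (which changes neither the Hadamard property nor the rank) we may assume $A$ is the matrix of the first $\f n2$ rows, and the top-left block of $HH\T=0$ then gives $AA\T=0$. Since $\rank(A)=\f n2=\rank(A\T)$, the code generated by $A$ is an $[n,\f n2]$ code for which $A\T$ is a check matrix, so $A$ generates its own dual. There is no genuine obstacle here: the statement is a specialisation of an already-established proposition, and the only point to be watched is the elementary divisibility remark that translates ``$3\mid n$ for a size-$n$ Hadamard matrix'' into the hypothesis of Proposition~\ref{dual1}.
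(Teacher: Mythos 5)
Your proof is correct and follows the paper's own route: the paper simply states this proposition as a consequence of Proposition~\ref{dual1}, and your specialisation (with $p=3$, matrix size $n=2m$) is exactly that. Your extra care in checking that $3\mid n$ forces $3\mid\f n2$ (since $n$ must be a multiple of $4$) is a detail the paper leaves implicit, and your fallback direct argument mirrors the proof of Proposition~\ref{dual1} verbatim.
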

 \begin{corollary}\label{tum} Let $H=H(12k)$ be a Hadamard matrix of rank  $6k$ over $GF(3)$. Then $6k$  linearly independent rows over $GF(3)$ of $H$  generate a self-dual $[n,\frac{n}{2},d]_3$ code.
 \end{corollary}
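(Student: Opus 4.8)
The plan is to read off Corollary \ref{tum} as the special case $n = 12k$ of the Proposition stated immediately above it (which in turn is Proposition \ref{dual1} phrased over $GF(3)$). First I would record the arithmetic: since $3 \mid 12k$ we have $3 \mid n$ with $n = 12k$, and $\frac{n}{2} = 6k$; moreover by hypothesis $\rank(H) = 6k = \frac{n}{2}$. These are exactly the hypotheses of the preceding Proposition, so it remains only to exhibit a $6k \times 12k$ submatrix of $H$ of rank $6k$ and invoke it.

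For that, note that $\rank(H) = 6k$ means the $GF(3)$-row space of $H$ has dimension $6k$; hence there exist $6k$ rows of $H$ that are linearly independent over $GF(3)$. Let $A$ be the $6k \times 12k$ matrix they form. Then $\rank(A) = 6k = \frac{n}{2}$, so the preceding Proposition applies verbatim and $A$ generates a self-dual $[12k, 6k, d]_3$ code. If one prefers to argue from scratch rather than cite, the two lines are: over $GF(3)$, $HH\T = 12k\, I_{12k} = 0$, hence $AA\T = 0$, so $A\T$ is a check matrix for the code $\mathcal{A}$ generated by $A$; since $\rank(A) = \rank(A\T) = 6k$ this forces $\mathcal{A}^{\perp} = \mathcal{A}$, i.e.\ $\mathcal{A}$ is self-dual of dimension $6k$ and length $12k$.

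There is essentially no obstacle here; the statement is a bookkeeping corollary, and the only points worth flagging are (a) the hypothesis $\rank(H) = 6k$ is precisely the extremal case permitted by Lemma \ref{rank}, so nothing is being assumed that contradicts it, and (b) the argument produces the parameters $[12k, 6k, d]_3$ but says nothing about the value of $d$: the minimum distance depends on the particular Hadamard matrix $H(12k)$ and on which $6k$ independent rows are chosen, and determining it requires the separate tools mentioned earlier (direct computation via the GAP/Guava setup for moderate $n$, and algebraic estimates such as Proposition \ref{sys} for larger $n$). The hard part, such as it is, is therefore not the self-duality but the distance, which the corollary deliberately leaves as the unspecified parameter $d$.
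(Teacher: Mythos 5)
Your proposal is correct and matches the paper's intent exactly: the paper gives no separate proof of Corollary \ref{tum}, treating it as the immediate specialization $n=12k$ of the preceding Proposition (itself a consequence of Proposition \ref{dual1}), and your direct two-line argument ($HH\T=0$ over $GF(3)$, hence $AA\T=0$, with the rank count $\rank(A)=\rank(A\T)=6k$ forcing $\mathcal{A}=\mathcal{A}^{\perp}$) is precisely the proof of Proposition \ref{dual1}. Your remarks about the distance $d$ being left unspecified are also consistent with the paper's treatment.
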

 It is interesting to find the distances attained. For a self-dual $[n,\frac{n}{2},d]_3$ ternary code it is known that $d \leq \lfloor\frac{n}{12}\rfloor+3$ \cite{mallow}. For $n=12k$ extremal ternary self-dual codes exist  for lengths $n=12,24,36,48, 60$ and do not exist for $n=72,96,120$ and for $n \geq 144$. Now by Corollary \ref{tum} $\frac{n}{2}$ linearly independent rows over $GF(3)$ of a Hadamard matrix of size $n=12k$ generate a self-dual ternary code; when $n=12,24,36,48, 60$ it is verified by computer that these are optimal. 
        
        \begin{lemma} In characteristic $3$ a non-zero sum of $r$ rows of a Hadamard $n\ti n$ matrix is the same as the sum of the first $r$ rows of a Hadamard $n\ti n$ matrix.
        \end{lemma}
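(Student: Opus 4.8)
The plan is to read ``a non-zero sum of $r$ rows'' in characteristic $3$ as a non-zero $GF(3)$-linear combination $v=\sum_{i\in S}c_ih_i$ of $r$ distinct rows $h_i$ of the Hadamard matrix $H$ with all coefficients $c_i$ non-zero, and then to exhibit a Hadamard matrix $H'$ of size $n$ whose first $r$ rows sum (unsigned) to $v$. The point that makes characteristic $3$ the right hypothesis is that the only non-zero scalars are $1$ and $-1$, so each term $c_ih_i$ equals $h_i$ or $-h_i$; thus $v$ is an honest sum of $r$ rows taken from a matrix obtained from $H$ by negating certain rows. (Over a field of characteristic $p>3$ a coefficient such as $2$ would not send a row of a Hadamard matrix to $\pm$ a row of a Hadamard matrix, which is exactly why the statement is stated for characteristic $3$.)

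First I would record the two elementary closure properties the argument rests on, each a one-line consequence of $HH\T=nI_n$: left-multiplying $H$ by a diagonal matrix with entries $\pm1$ (i.e.\ negating any set of rows) preserves $HH\T=nI_n$, and left-multiplying by a permutation matrix (i.e.\ permuting rows) preserves it as well; in either case the product $M$ satisfies $MM\T=nI_n$ since the conjugating matrix is orthogonal up to the scalar $n$ cancelling. So both a row-negation and a row-permutation of a Hadamard matrix of size $n$ is again a Hadamard matrix of size $n$.

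Next, given $v=\sum_{i\in S}c_ih_i$ with $|S|=r$, I would let $H^{(1)}$ be $H$ with row $i$ replaced by $-h_i$ for each $i\in S$ with $c_i=-1$; by the first closure property $H^{(1)}$ is Hadamard, and the sum of its rows indexed by $S$ is exactly $\sum_{i\in S}c_ih_i=v$. Then I would apply a row permutation carrying the indices in $S$ to $\{1,\dots,r\}$, producing a Hadamard matrix $H'$ whose first $r$ rows sum to $v$, which is the assertion. The hypothesis that the sum be non-zero is not actually used in the construction; it is present only because the lemma is meant to be applied to non-zero codewords (minimum-distance analysis) of the ternary codes generated by rows of Hadamard matrices, so that ``sum of $r$ rows'' ranges over the relevant weight-$r$-support codewords.

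I do not expect a genuine obstacle in the argument itself; the only subtlety — and the step I would flag — is fixing the correct interpretation of ``sum of $r$ rows'' (a \emph{signed} sum, which in characteristic $3$ coincides with an unsigned sum of rows of a row-negated Hadamard matrix) so that the statement is at once true and useful. Under the weaker reading in which only unsigned sums of $r$ distinct rows are allowed, the lemma degenerates to the single observation that a row-permutation of a Hadamard matrix is a Hadamard matrix, and then ``characteristic $3$'' and ``non-zero'' carry no force; the signed reading above is the one that justifies those hypotheses and the one I would adopt.
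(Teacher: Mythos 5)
Your argument is correct and is essentially identical to the paper's proof: both use that in characteristic $3$ the non-zero coefficients are $\pm 1$, that negating and permuting rows preserves the Hadamard property, and then move the relevant (possibly negated) rows to the first $r$ positions. Your explicit justification of the closure properties and your discussion of how to interpret ``sum of $r$ rows'' merely make precise what the paper leaves implicit.
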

        \begin{proof} In characteristic $3$ the non-zero coefficients in a sum of rows are $\pm 1$ only. Interchanging rows of a Hadamard matrix or multiplying any row by $-1$ results in a Hadamard matrix. Thus taking the relevant rows and placing them in the first $r$ places and multiplying the row by $-1$ if the coefficient is $-1$ results in a Hadamard matrix whose sum of the first $r$ rows is the same as the vector sum of the required rows.
        \end{proof}

        Thus if a lower bound on the support of the sum of the first $s$ rows of a particular type of Hadamard matrix over $\Z_3$ can be obtained then distances of the unit-derived codes from such a matrix $H$ are calculated and also the distances of the self-dual codes $(I,\al H)$ as in Proposition \ref{self} are obtained. 
        The following Proposition is a special case of Proposition \ref{lcd}.

\begin{proposition}        Let $H$ be a Hadamard $n\ti n $ and $3 \nmid n$. Then any $r$ rows of $H$ generates an LCD ternary $[n, r, d]_3$ code over $\Z_3$.
\end{proposition}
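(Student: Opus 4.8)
The plan is to derive this directly as a corollary of Proposition~\ref{lcd}, since the hypothesis $3\nmid n$ is precisely the condition ``$n\neq 0$ in the field'' with $\F=\Z_3$. First I would observe that over $\Z_3$ the element $n$ is nonzero exactly when $3\nmid n$, so the Hadamard identity $HH\T=nI_n$ exhibits $H$ as invertible over $\Z_3$. Next I would take the chosen $r$ rows of $H$ and, after a permutation of rows (which preserves the Hadamard property and invertibility), assume they form the top block $A$ in a decomposition $H=\begin{ssmatrix} A \\ B \end{ssmatrix}$ with $A$ of size $r\ti n$ and $B$ of size $(n-r)\ti n$. Then Proposition~\ref{lcd}, applied with $\F=\Z_3$, immediately gives that the code $\mathcal{A}$ generated by $A$ is an LCD $[n,r]$ code over $\Z_3$ with $B$ generating its dual; writing $d$ for the minimum distance of $\mathcal{A}$ yields the stated $[n,r,d]_3$ parameters.

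The only genuinely new point beyond quoting Proposition~\ref{lcd} is the reduction from ``any $r$ rows'' to ``the top $r$ rows,'' and I would dispatch it with the same remark already used in the proof of Proposition~\ref{dual1}: interchanging rows of a Hadamard matrix produces another Hadamard matrix of the same rank, so the selected rows can be moved into the first $r$ positions without loss of generality. There is essentially no obstacle here; the statement is labelled ``a special case of Proposition~\ref{lcd}'' and the proof is a one- or two-sentence specialization. If anything, the mild subtlety worth flagging is that the distance $d$ is not claimed to be any particular value—it is simply whatever minimum distance the chosen $r$-row generator matrix happens to have over $\Z_3$—so no distance computation is required and the proof stays purely structural.

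\begin{proof}
Since $3\nmid n$, the integer $n$ is a nonzero element of $\Z_3=GF(3)$, so the hypotheses of Proposition~\ref{lcd} are met with $\F=\Z_3$. Given any $r$ rows of $H$, interchange rows of $H$ so that these become the first $r$ rows; the resulting matrix is again a Hadamard matrix of size $n$, say $H'=\begin{ssmatrix} A \\ B \end{ssmatrix}$ with $A$ the $r\ti n$ matrix of the chosen rows and $B$ the remaining $(n-r)\ti n$ block, and $\begin{ssmatrix} A \\ B \end{ssmatrix}\begin{ssmatrix} A\T & B\T \end{ssmatrix}=nI_n$. By Proposition~\ref{lcd} the code $\mathcal{A}$ generated by $A$ over $\Z_3$ is an LCD $[n,r]$ code, and $B$ generates its dual. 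Letting $d$ denote the minimum distance of $\mathcal{A}$, this is an LCD ternary $[n,r,d]_3$ code.
\end{proof}
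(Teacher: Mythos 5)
Your proposal is correct and follows exactly the route the paper intends: the paper gives no separate proof, simply noting that the statement ``is a special case of Proposition~\ref{lcd},'' and your specialization to $\F=\Z_3$ together with the row-permutation remark (already used for Proposition~\ref{dual1}) supplies precisely the missing details.
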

Application:
With $n=20$ the following LCD codes are obtained:

$[20,5,10]_3,[20,6,10]_3,[20,10,6]_3,[20,11,5]_3,[20,13,4]_3$

With $n=28$ the following LCD codes are obtained: $[28,7,12], [28,14,6], [28,18,4]$.

The following Proposition is immediate from  Propositions \ref{same},\ref{diff}.
\ref{diff1}.
\begin{proposition}\label{three} Let $H=H(n)$ be a Hadamard matrix of size $n$  where $3 \nmid n$.

  (i)  Suppose  $A$ consists of  $\frac{n}{2}$ rows of $H$ and $B$ consists of the other $\frac{n}{2}$ rows of $H$.
  Then  $G(z)= A+iBz$ with $i=\sqrt{-1}$ in $GF(3^2)$ generates a self-dual convolutional $(n,\frac{n}{2}, \frac{n}{2};1,d)$ code in $GF(3^2)$ where $d=d(A) + d(B)$.

  (ii) Suppose  $A$ consists of  $r$ rows of $H$ with $r > \frac{n}{2}$ and $B$ consists of the other $(n-r)$ rows of $H$. Define $B_1= \begin{ssmatrix} \underline{0} \\ B \end{ssmatrix}$ where $\underline{0}$ is the zero $(2n-r)\ti n$ matrix. Then $G(z) = A+iB_1z$ generates a dual-containing convolutional $(n,r,n-r;1,d)$ code in $GF(3^2)$ where $i=\sqrt{-1}$.
\end{proposition}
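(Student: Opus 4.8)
The plan is to deduce both parts directly from Propositions \ref{same} and \ref{diff} (equivalently \ref{diff1}) by checking that $GF(3^2)$ is an admissible ground field for those constructions. First I would record the two arithmetic facts that make this work: since $3 \nmid n$ we have $n \neq 0$ in $GF(3)$, hence $n \neq 0$ in the extension $GF(3^2)$; and since $-1$ is a non-square in $GF(3)$ (the squares being $0$ and $1$ only), the polynomial $x^2+1$ is irreducible over $GF(3)$ and splits in $GF(3^2)$, so $i=\sqrt{-1}$ already lies in $GF(3^2)$. Thus, when invoking the earlier results with $\F = GF(3^2)$, the clause ``or in a quadratic extension of $\F$'' is not needed: everything happens over $GF(3^2)$ itself.

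For (i), I would set $\F = GF(3^2)$ and write $n = 2m$ with $m = \tfrac{n}{2}$, which is legitimate since a Hadamard matrix of size $n>1$ has even order and the statement already presupposes $\tfrac{n}{2}$ rows. With $A$ the chosen $m$ rows of $H$ and $B$ the complementary $m$ rows, the hypotheses of Proposition \ref{same} hold verbatim, so $G(z)=A+iBz$ generates a non-catastrophic self-dual convolutional code with parameters $(2m,m,m;1,d)$ and $d=d(A)+d(B)$, which is exactly the asserted $(n,\tfrac{n}{2},\tfrac{n}{2};1,d)$ code over $GF(3^2)$.

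For (ii), again take $\F = GF(3^2)$, let $A$ consist of $r$ rows of $H$ with $r > \tfrac{n}{2}$ and $B$ the remaining $n-r$ rows, put $t = 2r-n$ and $B_1=\begin{smatrix}0_{t\ti n}\\ B\end{smatrix}$. Then the hypotheses of Proposition \ref{diff} are met, and $G(z)=A+iB_1z$ generates a non-catastrophic dual-containing convolutional $(n,r,n-r;1,d)$ code over $GF(3^2)$; the free distance can be pinned down via the estimate recorded after Algorithm \ref{alg4}. There is no genuine obstacle here, as the content is wholly contained in the cited propositions; the only point deserving emphasis is the opening two-line verification that $n\neq 0$ and $\sqrt{-1}\in GF(3^2)$, which is what lets the constructions of Propositions \ref{same} and \ref{diff} be carried out over $GF(3^2)$ rather than over a larger field.
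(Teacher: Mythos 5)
Your proposal is correct and matches the paper's approach exactly: the paper offers no separate argument, stating only that the result is immediate from Propositions \ref{same}, \ref{diff} and \ref{diff1}, and your two-line verification that $3\nmid n$ gives $n\neq 0$ in $GF(3^2)$ and that $x^2+1$ is irreducible over $GF(3)$ so $\sqrt{-1}\in GF(3^2)$ is precisely the detail needed to invoke them with $\F=GF(3^2)$ and no further extension. You also correctly read $G(z)=A+iBz$ and take the zero block to be $(2r-n)\ti n$, silently repairing two typographical slips in the statement.
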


Proposition \ref{diff1} is used to calculate the distance $d$ in part (ii) of Proposition \ref{three}. 

\end{document}